%% file: main.tex
\RequirePackage[T1]{fontenc}
\documentclass[letterpaper,10 pt,conference]{ieeeconf}
\input{preamble}
\title{\LARGE\bf Adaptive Event-Triggered Model Predictive Coverage Control with Average Update-Rate Regulation}
\author{Pengbo Zhu$^{1}$, Jicheng Shi$^{2}$, Colin N. Jones$^{2}$, and Andreas A. Malikopoulos$^{3}$%
\thanks{This research was supported in part by the National Science Foundation under Grants CNS-2401007, CMMI-2348381, and IIS-2415478; by MathWorks; and by the Swiss National Science Foundation (SNSF) through project P500-2\_235379 and NCCR Automation, a National Centre of Competence in Research, under Grant 51NF40\_180545.}%
\thanks{$^{1}$Pengbo Zhu is with the School of Civil \& Environmental Engineering, Cornell University, Ithaca, NY, USA (e-mail: \texttt{pz283@cornell.edu}).}%
\thanks{$^{2}$Jicheng Shi and Colin N. Jones are with the Automatic Control Lab, EPFL, Switzerland (e-mail: \texttt{jicheng.shi@epfl.ch}, \texttt{colin.jones@epfl.ch}).}%
\thanks{$^{3}$Andreas A. Malikopoulos is with Applied Mathematics, Systems Engineering, Mechanical Engineering, Electrical \& Computer Engineering, and the School of Civil \& Environmental Engineering, Cornell University, Ithaca, NY, USA (e-mail: \texttt{amaliko@cornell.edu}).}}
\begin{document}
\maketitle
\thispagestyle{empty}\pagestyle{empty}
\begin{abstract}
Model predictive coverage control for multi-agent systems operates under tight communication
and computation budgets, which requires it to balance the load of optimization against responsiveness to varying spatial priorities. 
With a fixed event threshold, the update rate varies with
operating conditions, making a prescribed average update budget difficult
to maintain. We propose threshold-adaptive event-triggered model
predictive control (MPC), which adjusts the threshold online in proportion
to the difference between the target average period and the latest
inter-event interval. We prove convergence of the empirical
average update period to any feasible prescribed target and establish
initialization-dependent one-sided bounds on the cumulative average.
A nonlinear benchmark illustrates these budget guarantees. We then
apply the mechanism to coverage MPC using a task-aware statistic that
captures robot motion and changes in spatial importance. In a five-agent
moving-density study, the adaptive policy uses \MovingSolves\ global MPC
updates on average against a target of 200, compared with \FixedSolves\
for a fixed-threshold policy using the same statistic and initial threshold.
It also reduces coverage cost relative to periodic schedules with exactly
matched update counts. These results demonstrate direct average-budget
regulation and effective task-dependent scheduling of MPC updates.
\end{abstract}
\input{introduction}
\input{method}

\input{case1}
\section{Case II: budgeted multi-agent coverage control}
\label{sec:coverage}
\input{coverage_method}
\input{coverage_figures}
\input{coverage_results}
\section{Conclusion}
\label{sec:conclusion}
Coverage control must respond to changing spatial demand while managing
the cost of repeated coordination and optimization. Event triggering
allows update times to follow task changes, while threshold adaptation
regulates their average frequency as operating conditions vary.
The proposed threshold-adaptive event-triggered MPC makes the desired
long-run average update period a direct design input, with asymptotic
tracking and initialization-dependent one-sided guarantees under the
stated assumptions, without requiring a disturbance distribution.
The nonlinear benchmark illustrates these timing properties. In the
moving-density coverage study, adaptation maintains the update count
near the requested budget when the density moves faster than in calibration, while
event-dependent timing improves coverage relative to periodic schedules
with matched update counts. The principal benefit is a direct update-budget
interface that retains the responsiveness of event-triggered MPC.
Future work will address stability-certified implementations,
asynchronous agent updates, and solver execution costs.

\useRomanappendicesfalse
\appendices

\section{Coverage objective and MPC architecture}\label{AppendixA}
At each global event, each agent solves a fixed-reference
tracking problem. For the sake of clarification, omitting the agent index, the objective is
\begin{equation}
\begin{aligned}
 \min_{\{x_j,u_j,\sigma_j\}}\quad&
 \sum_{j=0}^{N-1}
 \left(
 \|p_j-c_i\|_{Q_p}^2+\|v_j\|_{Q_v}^2
 +\|u_j\|_R^2+\rho\sigma_j^2
 \right)\\
 &+\|p_N-c_i\|_{Q_p}^2+\|v_N\|_{Q_v}^2
 +\rho\sigma_N^2,
\end{aligned}
\end{equation}
subject to $x_0=x_{k_i}$, the nominal dynamics in
\eqref{eq:double}, and
\begin{align}
 \|u_j\|_\infty&\le1, &&j=0,\ldots,N-1,\\
 \|v_j\|_\infty&\le1, &&j=0,\ldots,N,\\
 (0.5-\sigma_j)\mathbf{1}
 &\le p_j\le(9.5+\sigma_j)\mathbf{1},
 &&j=0,\ldots,N,\\
 \sigma_j&\ge0, &&j=0,\ldots,N.
\end{align}
Here $\|z\|_Q^2=z^\top Qz$,
$Q_p=2I$, $Q_v=0.5I$, $R=0.1I$, $\rho=200$,
and $N=N_{\max}=15$. And $c_i$ is
for each agent's event-time centroid. A scalar slack softens both position
coordinates at each stage, including the terminal stage.
The reference $c_i$ is held until the next global event.

\section{Coverage bound}\label{AppendixB}
 For the disturbance-free coverage example, the plant and predictor
use identical dynamics and inputs, giving $e_i^r(k)=0$ in exact
arithmetic. Let $a=\lceil\alpha\rceil$. Under the stated
componentwise velocity and input bounds, the displacement over
the first $a$ steps satisfies
\begin{equation}
 e_i^p(k_i+j)
 \le jT_s\sqrt{2}+\tfrac12 jT_s^2\sqrt{2},
 \qquad 1\le j\le a.
\end{equation}
If the centroids and fallback sites lie in $A=[0,10]^2$, their
separation is bounded by $\operatorname{diam}(A)=10\sqrt{2}$.
Consequently, a sufficient trigger bound is
\begin{equation}
 E_{\mathrm{cov}}
 =\max\left\{
 aT_s\sqrt{2}+\tfrac12 aT_s^2\sqrt{2},\,
 0,\,
 10\sqrt{2}
 \right\}.
\end{equation}
For $a=6$ and $T_s=0.2$ s, the displacement bound is approximately
$1.867$ m, so $E_{\mathrm{cov}}=10\sqrt{2}$ m. The initialization
bounds in Theorem~\ref{thm:budget} therefore become
\begin{align}
 L&=\eta(\alpha-N_{\max})=-0.09~\mathrm{m},\\
 U&=E_{\mathrm{cov}}+\eta(\alpha-1)
   \approx14.192~\mathrm{m}.
\end{align}
The value $U$ is a sufficient initialization for the one-sided
average-period guarantee, not the initialization used in the coverage experiments. The tested initial thresholds, $0.2125$ m
and zero, illustrate average-period regulation from general
initializations.

\bibliographystyle{IEEEtran}
\bibliography{refs}
\end{document}

%% file: preamble.tex
\IEEEoverridecommandlockouts
\usepackage{amsmath,amssymb}

\usepackage{amsthm,graphicx,algorithm,algorithmic}
\newtheorem{assumption}{Assumption}
\newtheorem{theorem}{Theorem}
\newtheorem{remark}{Remark}
\newtheorem{lemma}{Lemma}
\newtheorem{corollary}{Corollary}

\graphicspath{{figures/}}
\input{numbers}

\input{review_numbers}

\usepackage{xcolor}
\newcommand{\change}[1]{\textcolor{black}{#1}}

%% file: numbers.tex
\newcommand{\MovingSolves}{204.7}
\newcommand{\FixedSolves}{266.7}

\newcommand{\FixedGain}{3.14}

\newcommand{\ZeroSolves}{208.1}
\newcommand{\ZeroGain}{5.00}

%% file: review_numbers.tex
\newcommand{\AdaptiveCountError}{2.35}
\newcommand{\FixedCountError}{33.35}

%% file: introduction.tex
\section{Introduction}
\label{sec:intro}
\subsection{Background}
Coverage control coordinates mobile robots to distribute sensing or
service over an area according to spatial demand. A spatial importance
density assigns greater weight to regions requiring closer coverage;
Voronoi partitions and density-weighted centroids provide a classical
means of organizing the robots~\cite{cortes2004}. When the density varies,
the desired configuration changes with the task~\cite{lee2015}.
Maintaining coverage therefore requires refreshed position information,
spatial references, and control decisions. Frequent coordination can
burden communication, while repeated optimization adds computational
cost. Event-triggered control (ETC) schedules updates according to
observed need rather than a fixed clock~\cite{heemels2012}. Its relevance
to coverage is demonstrated by asynchronous event-triggered
broadcasting~\cite{ajina2021} and event- and self-triggered control and
communication for speed-constrained unicycle robots~\cite{zhou2021}.

\subsection{Related Work}
\subsubsection{Model predictive coverage control}
Model predictive control (MPC) incorporates robot dynamics and constraints
into the coverage problem, but each update entails solving an optimization
problem for each participating robot. Carron and
Zeilinger~\cite{carron2020mpcc} combine centroid-based coverage with
tracking MPC. Subsequent architectures address learning in unknown
environments~\cite{coverage} and multirate planning and tracking for
time-varying coverage~\cite{eberhard2025}. These developments make the
choice of when to refresh references and recompute control plans an
important part of resource-aware coverage.

\subsubsection{Resource budgets and event-triggered MPC}
This paper considers a prescribed average update period as a simple resource specification:
recompute the MPC plan once every $\alpha$ sampling steps on average,
where $\alpha\ge1$ is a user-specified target. This specifies a long-run
update rate rather than a hard limit on concurrent computations,
as considered in~\cite{grafe2025}. For a synchronously updated team
of $M$ agents, each global MPC update entails $M$ local solves.
The update count therefore provides a computational resource measure, and
total communication also depends on the implementation of trigger monitoring.

Event-triggered MPC recomputes a control plan when a triggering condition
is met, for example when a state-prediction error exceeds a
threshold~\cite{lehmann2013}. Adaptive triggering has also been studied
for perturbed nonlinear systems~\cite{aetmpc_li2023}. However, selecting
a threshold does not directly specify the resulting average update
period. A fixed threshold may require offline calibration, and the
resulting rate can change with the disturbance regime or task evolution.
Average-rate design has precedents: Brunner et al.~\cite{brunner2017}
develop robust event-triggered MPC with a guaranteed average sampling
rate, while Deng et al.~\cite{deng2022} use the disturbance distribution
to construct triggering sets for a prescribed expected inter-execution
time. Our objective is instead to regulate the realized empirical
average period through online threshold feedback, without requiring
a disturbance distribution.

The coverage control problem highlights why both task sensitivity and update rate
matter. A moving density can change the desired configuration even
when robots follow their predicted dynamics exactly; prediction error
alone may therefore miss a relevant task change. A task-aware statistic
can detect that change, but its growth rate also depends on robot motion
and density evolution. A threshold calibrated for one density speed
may consequently produce a different update rate at another speed.
Adapting the threshold from realized inter-event intervals addresses
this budget mismatch. Periodic scheduling can also meet an average
target, but event-based scheduling retains the freedom to place updates
when they are useful to the coverage task.

\subsection{Contributions and Organization}
We propose threshold-adaptive event-triggered MPC that adjusts its
threshold using the difference between the desired average period and
the latest inter-event interval. The feedback principle is related to
our disturbance-adaptive predictive control work~\cite{shi2024dadmpc,dadddpc},
which regulates average constraint violations; here, the regulated
quantity is the average inter-event period. Our contributions are:
\begin{enumerate}
\item 
A threshold-adaptation mechanism that regulates the average MPC update period to a prescribed target without
requiring a disturbance distribution. Under bounded-trigger and
executable-plan assumptions, we establish asymptotic average-period
tracking and initialization-dependent one-sided bounds.

\item A task-aware application to multi-agent coverage MPC, with a
sufficient trigger bound and a global MPC update-budget interpretation.
\item Complementary numerical evidence: a nonlinear benchmark illustrates
the timing guarantees, while moving-density coverage demonstrates budget
regulation under a task-speed change and improved coverage relative to
periodic schedules with matched update counts.
\end{enumerate}

The remainder of this paper is organized as follows.
Section~\ref{sec:prelim} presents the problem formulation, and
Section~\ref{sec:method} presents the method and its theoretical guarantees.
Section~\ref{sec:case1} evaluates the nonlinear benchmark.
Section~\ref{sec:coverage} presents the coverage formulation, the
triggering mechanism, and the associated case study.
Section~\ref{sec:conclusion} concludes the paper and outlines future work.

%% file: method.tex
\section{Preliminaries and problem formulation}
\label{sec:prelim}
Consider a discrete-time plant with bounded additive disturbance,
\begin{equation}
 x_{k+1}=f(x_k,u_k)+w_k,\qquad \|w_k\|_s\le\gamma,
 \label{eq:plant}
\end{equation}
with admissible states and inputs in $\mathcal X\subseteq\mathbb R^n$ and
$\mathcal U\subseteq\mathbb R^m$. The subscript $k$ indexes sampling instants,
whereas $i$ indexes events.
\change{At each event time $k_i$, an inner MPC generates an executable $N$-step input plan. Let $u_{k\mid k_i}$ denote the planned input for sampling instant $k$, generated at $k_i$. The corresponding nominal state prediction is initialized as $\hat{x}(k_i\mid k_i)=x_{k_i}$ and propagated as}
\begin{equation}
 \hat x(k+1\mid k_i)=f(\hat x(k\mid k_i),u_{k\mid k_i}).
\end{equation}
\change{Between events, the plant applies the stored plan, so $u_k=u_{k\mid k_i}$ for $k_i\leq k<k_{i+1}$.}
The original trigger statistic is the nominal prediction error
\begin{equation}
 e_i(k)=\|x_k-\hat x(k\mid k_i)\|_s\ge0.\label{eq:prediction}
\end{equation}
No new \change{MPC} solve
occurs while the old plan is being executed. The maximum allowed
inter-event interval \change{$N_{\max}$} satisfies $\lceil\alpha\rceil\le N_{\max}\le N$,
where $\alpha\ge1$ is the requested average period. \change{For convenience, define $a:=\lceil\alpha\rceil$}. 
The existence of
such plans throughout operation is an assumption on the inner MPC.

Let $k_0=0$ denote the mandatory initial event and
$\Delta_i=k_i-k_{i-1}$, $i\ge1$. Hence
$\sum_{j=1}^i\Delta_j=k_i$. Given $\alpha$, the three design aims are
\begin{align}
 \text{Aim 1: }&\lim_{i\to\infty} k_i/i=\alpha,\label{eq:aim1}\\
 \text{Aim 2: }&k_i/i\ge\alpha\quad\text{for all }i\ge1,\label{eq:aim2}\\
 \text{Aim 3: }&k_i/i\le\alpha\quad\text{for all }i\ge1.\label{eq:aim3}
\end{align}
The last two are alternative one-sided guarantees selected by
initialization. An average-period lower bound limits cumulative event
frequency from the initial time; it is not a minimum bound on every
individual interval or a budget for every sliding window.

\section{Threshold-adaptive event-triggered MPC}
\label{sec:method}
\subsection{Threshold adaptation and implementation}
Algorithm~\ref{alg:adaptive} implements the proposed threshold-adaptive
event-triggered MPC. In cycle $i$, it chooses
\begin{equation}
 \Delta_{i+1}=\min\bigl(\mathcal J_i\cup\{N_{\max}\}\bigr),
 \label{eq:trigger}
\end{equation}
where $\mathcal J_i=\{j\in\{1,\ldots,N_{\max}\}:
e_i(k_i+j)>\varepsilon_i\}$. It updates the threshold at the completed event:
\begin{equation}
 \varepsilon_{i+1}=\varepsilon_i+\eta(\alpha-\Delta_{i+1}),
 \qquad 0<\eta<\infty.\label{eq:update}
\end{equation}
A shorter-than-target interval raises the threshold; a longer one lowers it.
The maximum interval prevents exhaustion of the available plan. The
comparison is strictly $>$, not $\ge$. Negative thresholds are allowed:
they force a next-sample event, not a zero-time event. 

\begin{algorithm}[t]
\caption{Threshold-adaptive event-triggered MPC}\label{alg:adaptive}
\begin{algorithmic}[1]
\REQUIRE $\alpha\ge1$, $\eta>0$, finite $\varepsilon_0$;
 $a\le N_{\max}\le N$; inner MPC $\pi$.
\STATE Set $i=0$, $k_0=0$; measure $x_0$.
\LOOP
\STATE Solve $\pi(x_{k_i})$ for a length-$N$ input plan
 and its nominal state predictions.
\FOR{$j=1,\ldots,N_{\max}$}
\STATE Apply $u_{k_i+j-1\mid k_i}$; measure $x_{k_i+j}$.
\STATE Evaluate $e_i(k_i+j)$ using~\eqref{eq:prediction}.
\IF{$e_i(k_i+j)>\varepsilon_i$ \textbf{or} $j=N_{\max}$}
\STATE $\Delta_{i+1}=j$; $k_{i+1}=k_i+j$.
\STATE $\varepsilon_{i+1}=\varepsilon_i+\eta(\alpha-j)$.
\STATE $i\leftarrow i+1$; \textbf{break} the inner loop.
\ENDIF
\ENDFOR
\ENDLOOP
\end{algorithmic}
\end{algorithm}

\begin{remark}
Algorithm~\ref{alg:adaptive} only requires that $\pi(\cdot)$ produce a
valid open-loop input sequence of length at least $N_{\max}$. The
budget guarantee is therefore independent
of the MPC formulation. The MPC layer is responsible for state-constraint
satisfaction, recursive feasibility, and stability;
Algorithm~\ref{alg:adaptive} is responsible for the average period
budget.
\end{remark}

\subsection{Assumption and main theorem}
\begin{assumption}\label{ass:error}
For every admissible event cycle with the same applied inputs in plant
and predictor, there is a uniform finite $\varepsilon_{\max}\ge0$ such that
\begin{equation}
 0\le e_i(k_i+j)\le\varepsilon_{\max},\qquad 1\le j\le a.
 \label{eq:errorbound}
\end{equation}
Every executed update supplies an executable plan of length at least
$N_{\max}$, and these properties hold throughout operation.
\end{assumption}
Define the initialization thresholds
\begin{align}
 \varepsilon_{\min}&=\eta(\alpha-N_{\max})=:L\le0,\\
 \varepsilon_{\max,2}&=\varepsilon_{\max}+\eta(\alpha-1)=:U
 \ge\varepsilon_{\max}.\label{eq:upper}
\end{align} 

\begin{theorem}[Three budget guarantees]\label{thm:budget}
Under Assumption~\ref{ass:error}, $1\le\alpha\le N_{\max}\le N$,
and finite positive $\eta$, Algorithm~\ref{alg:adaptive} satisfies:
\begin{enumerate}
\item Aim 1 for every finite $\varepsilon_0$;
\item Aim 2 if $\varepsilon_0\ge U$;
\item Aim 3 if $\varepsilon_0\le L$.
\end{enumerate}
Moreover, for every event index $i\ge0$,
\begin{equation}
 \varepsilon_i\in[\min(\varepsilon_0,L),\max(\varepsilon_0,U)].
 \label{eq:bound}
\end{equation}
\end{theorem}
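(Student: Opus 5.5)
The proof rests on the telescoped update law. Summing \eqref{eq:update} from $0$ to $i-1$ gives
\begin{equation*}
 \varepsilon_i=\varepsilon_0+\eta(i\alpha-k_i),\qquad\text{so}\qquad \frac{k_i}{i}-\alpha=\frac{\varepsilon_0-\varepsilon_i}{\eta i}.
\end{equation*}
All three aims therefore reduce to statements about the threshold sequence. Aim 1 follows once $\varepsilon_i$ is bounded, since then the right-hand side is $O(1/i)$. Aim 2 is equivalent to $\varepsilon_i\le\varepsilon_0$ for all $i\ge1$. Aim 3 is equivalent to $\varepsilon_i\ge\varepsilon_0$ for all $i\ge1$. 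The plan is to prove the invariant \eqref{eq:bound} first, then prove the two one-sided monotonicity-type bounds, and read off the aims.

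\emph{Two one-step facts.}
\begin{enumerate}
\item If $\varepsilon_i\ge\varepsilon_{\max}$, then by Assumption~\ref{ass:error} no trigger fires for $j\le a$, since the comparison is strict. Hence $\Delta_{i+1}\ge a\ge\alpha$, and so $\varepsilon_{i+1}\le\varepsilon_i$.
\item If $\varepsilon_i<0$, then $e_i\ge0>\varepsilon_i$ fires at $j=1$. Hence $\Delta_{i+1}=1\le\alpha$, and $\varepsilon_{i+1}\ge\varepsilon_i$.
\end{enumerate}
In every case $1\le\Delta_{i+1}\le N_{\max}$. The increment therefore lies in $[\eta(\alpha-N_{\max}),\eta(\alpha-1)]=[L,\eta(\alpha-1)]$.

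\emph{Upper bound in \eqref{eq:bound}.} I induct. If $\varepsilon_i\ge\varepsilon_{\max}$, fact 1 gives $\varepsilon_{i+1}\le\varepsilon_i\le\max(\varepsilon_0,U)$. If $\varepsilon_i<\varepsilon_{\max}$, then $\varepsilon_{i+1}<\varepsilon_{\max}+\eta(\alpha-1)=U$.

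\emph{Lower bound in \eqref{eq:bound}.} Symmetrically, if $\varepsilon_i<0$, fact 2 gives $\varepsilon_{i+1}\ge\varepsilon_i$. If $\varepsilon_i\ge0$, then $\varepsilon_{i+1}\ge0+L=L$. This establishes \eqref{eq:bound}, and Aim 1 follows for every finite $\varepsilon_0$.

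\emph{Aims 2 and 3.} These are the main obstacle, since they need $\varepsilon_i\le\varepsilon_0$ (respectively $\ge$) for \emph{all} $i$, not just boundedness. For Aim 2 with $\varepsilon_0\ge U$, I induct on the claim $\varepsilon_i\le\varepsilon_0$.
\begin{itemize}
\item If $\varepsilon_i\ge\varepsilon_{\max}$, fact 1 gives $\varepsilon_{i+1}\le\varepsilon_i\le\varepsilon_0$.
\item If $\varepsilon_i<\varepsilon_{\max}$, then $\varepsilon_{i+1}<U\le\varepsilon_0$.
\end{itemize}
So $\varepsilon_i\le\varepsilon_0$ for all $i$, giving $k_i\ge i\alpha$.

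For Aim 3 with $\varepsilon_0\le L\le0$, the dual induction on $\varepsilon_i\ge\varepsilon_0$ runs as follows.
\begin{itemize}
\item If $\varepsilon_i<0$, fact 2 gives $\varepsilon_{i+1}\ge\varepsilon_i\ge\varepsilon_0$.
\item If $\varepsilon_i\ge0$, then $\varepsilon_{i+1}\ge L\ge\varepsilon_0$.
\end{itemize}
This gives $k_i\le i\alpha$.

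The points I must check carefully are these. First, the strict inequality in the trigger justifies ``no event before $a$'' when $\varepsilon_i=\varepsilon_{\max}$. Second, $a\le N_{\max}$, so the bound \eqref{eq:errorbound} covers all $j\le a$. Third, the assumption holds along every realized cycle, so the induction applies uniformly.
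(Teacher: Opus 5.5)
Your proof is correct and is essentially the paper's argument: the same telescoped identity \eqref{eq:identity}--\eqref{eq:average}, the same two one-step facts (when $\varepsilon_i\ge\varepsilon_{\max}$ there is no strict crossing in the first $a$ steps, so $\Delta_{i+1}\ge a$, though a forced event at $j=a=N_{\max}$ can still occur; when $\varepsilon_i<0$, $\Delta_{i+1}=1$), and the paper's three regions regrouped into a two-case split for each side of \eqref{eq:bound}. One redundancy: Aims 2 and 3 need no separate induction, since for $\varepsilon_0\ge U$ (resp.\ $\varepsilon_0\le L$) the bound \eqref{eq:bound} already reads $\varepsilon_i\le\varepsilon_0$ (resp.\ $\varepsilon_i\ge\varepsilon_0$).
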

\begin{proof}
\emph{Part 1: reduce the three aims to threshold properties.}
Summing~\eqref{eq:update} and using $k_0=0$ yields
\begin{align}
 \varepsilon_i
 &=\varepsilon_0+\eta\sum_{j=1}^i(\alpha-\Delta_j)\nonumber\\
 &=\varepsilon_0+\eta(\alpha i-k_i).\label{eq:identity}
\end{align}
Thus, for $i\ge1$,
\begin{equation}
 \frac{k_i}{i}=\alpha+\frac{\varepsilon_0-\varepsilon_i}{\eta i}.
 \label{eq:average}
\end{equation}
Three sufficient properties follow. \textbf{T1:} a uniformly bounded
threshold sequence gives Aim 1. \textbf{T2:} if
$\varepsilon_i\le\varepsilon_0$ for every $i$, then Aim 2 holds.
\textbf{T3:} if $\varepsilon_i\ge\varepsilon_0$ for every $i$, then
Aim 3 holds. No statistical disturbance distribution is used.

\emph{Part 2: establish a forward-invariant threshold interval.}
Let $I=[L,U]$. Every interval is in $[1,N_{\max}]$. Consider three
regions of the real line.

\emph{Case 1: $\varepsilon_i\ge\varepsilon_{\max}$.}
By~\eqref{eq:errorbound}, strict signal crossing cannot occur during
the first $a$ steps. A forced event at $N_{\max}=a$ is possible, so
$\Delta_{i+1}\ge a\ge\alpha$. Consequently,
\begin{align}
 \varepsilon_{i+1}&\le\varepsilon_i,\nonumber\\
 \varepsilon_{i+1}&\ge\varepsilon_i+\eta(\alpha-N_{\max})
 \ge\varepsilon_{\max}+L\ge L.\label{eq:caseupper}
\end{align}

\emph{Case 2: $\varepsilon_i<0$.}
Nonnegativity gives $e_i(k_i+1)>\varepsilon_i$, hence
$\Delta_{i+1}=1$. Since $\alpha\ge1$,
\begin{equation}
 \varepsilon_i\le\varepsilon_{i+1}
 =\varepsilon_i+\eta(\alpha-1)<\eta(\alpha-1)\le U.
\end{equation}

\emph{Case 3: $0\le\varepsilon_i<\varepsilon_{\max}$.}
Using only $1\le\Delta_{i+1}\le N_{\max}$ gives
\begin{equation}
 L\le\varepsilon_{i+1}<\varepsilon_{\max}+\eta(\alpha-1)=U.
\end{equation}

If $\varepsilon_i\in I$, each case keeps $\varepsilon_{i+1}$ in $I$.
Above $U$, Case 1 makes the threshold non-increasing without crossing
below $L$. Below $L$, Case 2 makes it non-decreasing without crossing
above $U$. Therefore, the larger interval in~\eqref{eq:bound} is also
forward invariant, proving T1 for every finite initialization.
Finite-time entry into $I$ is not needed. For example, at $\alpha=1$
a negative threshold remains constant; the interval still equals one.

Uniform boundedness and~\eqref{eq:average} prove Aim 1. If
$\varepsilon_0\ge U$, then $[L,\varepsilon_0]$ is forward invariant:
Case 1 cannot increase the threshold and Cases 2--3 cannot exceed $U$.
Thus T2 proves Aim 2. If $\varepsilon_0\le L$, the symmetric invariant
interval $[\varepsilon_0,U]$ and T3 prove Aim 3.
\end{proof}

\begin{remark}[Finite-run update counts]
With $B=\max\{0,\varepsilon_0-L,U-\varepsilon_0\}/\eta$,
\eqref{eq:bound}--\eqref{eq:average} give
\begin{equation}
 |k_i-\alpha i|\le B,\qquad |k_i/i-\alpha|\le B/i\quad(i\ge1).
 \label{eq:finite}
\end{equation}
Define $S_T=\#\{i:k_i<T\}$, including the initial update but not
a terminal event at $T$. For the last update $k_i<T$, we have
$S_T=i+1$ and $0<T-k_i\le N_{\max}$; hence
\begin{equation}
 |S_T-T/\alpha|\le1+(B+N_{\max})/\alpha.
 \label{eq:count}
\end{equation}
Thus $S_T/T\to1/\alpha$, although a finite run need not contain
exactly $T/\alpha$ updates. The bound is conservative when the
signal bound is conservative.
\end{remark}

\subsection{Computing the prediction-error bound}
The following elementary recursion, used in robust MPC analyses such
as~\cite{limon2002}, can be used to verify Assumption~\ref{ass:error}.

\begin{lemma}[Prediction-error propagation]\label{lem:error}
Suppose $\|f(x,u)-f(y,u)\|_s\le L_f\|x-y\|_s$ on a domain containing
all relevant real and nominal trajectories, and $\|w_k\|_s\le\gamma$.
For equal initial states and the same input sequence,
\begin{equation}
 \|x_{k+j}-\hat x(k+j\mid k)\|_s
 \le E_j:=\gamma\sum_{r=0}^{j-1}L_f^r.\label{eq:lipschitz}
\end{equation}
\end{lemma}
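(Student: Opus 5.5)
The plan is a one-step error recursion followed by induction on $j$. Set $\delta_j:=\|x_{k+j}-\hat x(k+j\mid k)\|_s$ for $j\ge0$. Since both trajectories start from the same state at time $k$, we have $\delta_0=0$, and $E_0=0$ is the empty sum. The claim is therefore $\delta_j\le E_j$ for all $j\ge0$, where $E_j$ obeys $E_{j+1}=L_fE_j+\gamma$ with $E_0=0$. This recursion follows from $\gamma\sum_{r=0}^{j}L_f^r=\gamma+L_f\gamma\sum_{r=0}^{j-1}L_f^r$.

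For the inductive step, subtract the nominal update $\hat x(k+j+1\mid k)=f(\hat x(k+j\mid k),u_{k+j\mid k})$ from the plant update~\eqref{eq:plant}. Because the plant applies the same stored input $u_{k+j}=u_{k+j\mid k}$, only the state argument of $f$ differs. This gives
\begin{equation*}
 x_{k+j+1}-\hat x(k+j+1\mid k)=f(x_{k+j},u_{k+j\mid k})-f(\hat x(k+j\mid k),u_{k+j\mid k})+w_{k+j}.
\end{equation*}
Apply the triangle inequality for $\|\cdot\|_s$, the Lipschitz hypothesis with the common input, and $\|w_{k+j}\|_s\le\gamma$. This yields $\delta_{j+1}\le L_f\delta_j+\gamma$.

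Assuming $\delta_j\le E_j$, and using $L_f\ge0$ so that multiplying by $L_f$ preserves the inequality, we obtain $\delta_{j+1}\le L_fE_j+\gamma=E_{j+1}$. This closes the induction.

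The only delicate point is that the Lipschitz bound is assumed only on a domain. We must ensure both $x_{k+j}$ and $\hat x(k+j\mid k)$ lie in it at every step used. The hypothesis states exactly this ("a domain containing all relevant real and nominal trajectories"), so no extra argument is needed beyond noting it at each application.

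Finally, the bound matches Assumption~\ref{ass:error} for equal initial states and identical applied inputs. Taking $\varepsilon_{\max}=\max_{1\le j\le a}E_j$, which equals $E_a$ since $E_j$ is nondecreasing in $j$, verifies~\eqref{eq:errorbound}.
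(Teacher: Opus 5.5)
Your proposal is correct and follows the paper's proof: the triangle inequality and the common-input Lipschitz bound give $\delta_{j+1}\le L_f\delta_j+\gamma$ with $\delta_0=0$, and induction closes the argument. Your explicit use of $L_f\ge0$ in the inductive step and the choice $\varepsilon_{\max}=E_a$ match the paper's remarks following the lemma.
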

\begin{proof}
Write $d_j=\|x_{k+j}-\hat x(k+j\mid k)\|_s$. The triangle
inequality and common-input Lipschitz bound give
$d_{j+1}\le L_fd_j+\gamma$, with $d_0=0$.
Induction yields~\eqref{eq:lipschitz}.
\end{proof}
Since $L_f\ge0$, $E_j$ is nondecreasing and one may take
\begin{equation}
 \varepsilon_{\max}=E_a=
 \begin{cases}
 \gamma(L_f^a-1)/(L_f-1),&L_f\ne1,\\
 a\gamma,&L_f=1.
 \end{cases}\label{eq:recipe}
\end{equation}
For differentiable $f$ on an appropriate convex state domain, a uniform
induced-norm bound on $\partial f/\partial x$ supplies $L_f$.
If $L_f<1$, $E_j\le\gamma/(1-L_f)$; global contraction is not required
because $a$ is finite. Bounds must hold on the prediction and actual
domains; imposing soft constraints does not by itself establish that fact.

%% file: case1.tex
\section{Case I: nonlinear budget verification}
\label{sec:case1}
We use a perturbed variant of the nonlinear benchmark
in~\cite{limon2002}:
\begin{align}
 x_{1,k+1}&=0.55x_{1,k}+0.12x_{2,k}\nonumber\\
 &\quad+(0.01-0.6x_{1,k}+x_{2,k})u_k+w_{1,k},\\
 x_{2,k+1}&=0.67x_{2,k}+(0.15+x_{1,k}-0.8x_{2,k})u_k+w_{2,k}.
\end{align}
The input satisfies $|u_k|\le0.1$, and $w_k=(0.1,0.1)$ on input
steps $k\bmod23<3$, zero otherwise. We set $N=N_{\max}=15$.
The zero-reference stage cost uses state/input weights $2I$ and 1;
The terminal
state weight is also $2I$.

The trigger is $e_i(k)=\|x_k-\hat x(k\mid k_i)\|_\infty$.
The maximum absolute row sum of the state Jacobian over
$|u|\le0.1$ is $0.85$, giving
$\varepsilon_{\max}=0.1(1-0.85^{\lceil\alpha\rceil})/(1-0.85)$
by Lemma~\ref{lem:error}. Runs last $T=1000$ steps, with
$\eta=0.01$, $\alpha\in\{2,5.5,8\}$, and
$\varepsilon_0\in\{L,0,U\}$. We use nine deterministic initial
states: $(-1.5,-1.5)$, $(0.9,-1.5)$, $(-1.5,1.5)$, $(0.9,1.5)$,
$(0.9,0)$, $(-1.5,0)$, $(0,1.5)$, $(0,-1.5)$, and $(0.6,0.8)$.
These 81 runs test the three aims. 
A further 324 runs scan
36 fixed thresholds, $0,0.01,\ldots,0.30$ and
$0.0999,0.1001,0.4,0.5,0.6$, at the same starts and inner MPC settings.

\begin{figure*}[tp]
\centering\includegraphics[width=.9\textwidth]{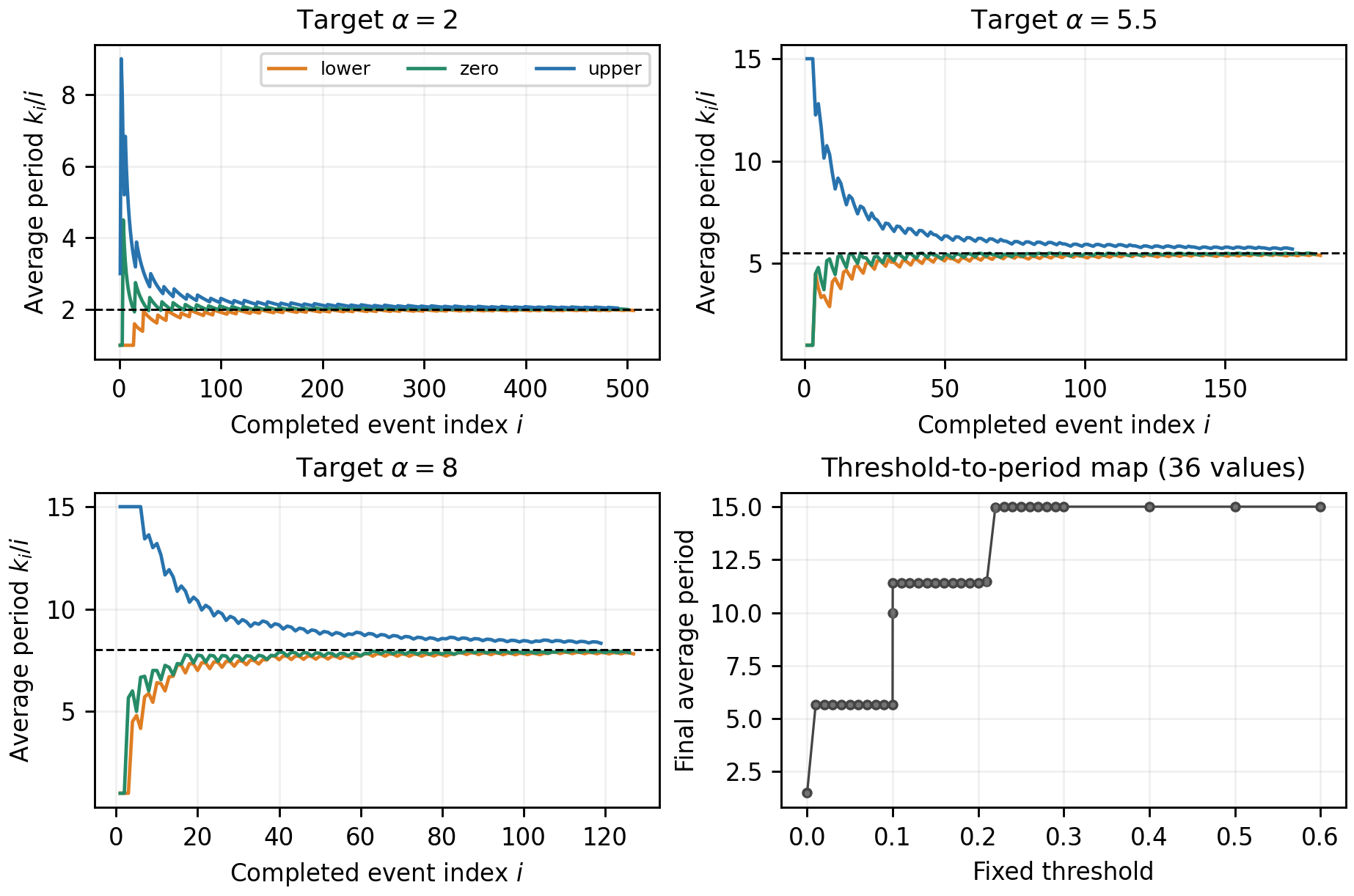}
\caption{Case I: average-period tracking and the sampled
fixed-threshold map. Thin lines show all nine starts; bold lines
use $(.9,0)$. Dashed lines are target periods. The rightmost panel
connects mean final periods at 36 tested thresholds, with
individual starts shown faintly.}
\label{fig:case1}
\end{figure*}

Figure~\ref{fig:case1} shows tracking from both prescribed sides
and from zero initialization, including the noninteger target
$\alpha=5.5$. All event-wise one-sided checks pass, and the
largest residual in \eqref{eq:identity} is
$4.4\times10^{-15}$. At $\alpha=5.5$, final average periods
are approximately $5.391$, $5.451$, and $5.701$ for lower,
zero, and upper initialization, respectively, illustrating
the finite-run transient around the asymptotic target.

The sampled threshold-to-period map shows pronounced plateaus:
selecting a threshold is not equivalent to selecting an average
period. In contrast, the proposed method accepts $\alpha$ directly without constructing
that map. 

%% file: coverage_method.tex
\subsection{Coverage objective and MPC architecture}
\label{sec:coverage_method}
Let $M$ agents with positions $p_m$ cover a compact convex region
$A\subset\mathbb R^2$ with nonnegative spatial importance
$\phi_k(q)$. The locational coverage cost is
\begin{equation}
 H(p,k)=\int_A\min_m\|q-p_m\|_2^2\phi_k(q)\,dq .
 \label{eq:coverage_cost}
\end{equation}
At a global event $k_i$, a central coordinator forms Voronoi cells
$W_{i,m}$ from current positions and computes
\begin{equation}
 c_m(W_{i,m},\phi_k)=
 \frac{\int_{W_{i,m}}q\phi_k(q)\,dq}
 {\int_{W_{i,m}}\phi_k(q)\,dq}.
 \label{eq:centroid}
\end{equation}
Building on the centroid-tracking MPC architecture of Carron and
Zeilinger~\cite{carron2020mpcc}, each
local tracking MPC receives $c_m(W_{i,m},\phi_{k_i})$ as its fixed
position reference and zero velocity reference. Plans, cells, and
references are held until the next global event. 
In~\cite{carron2020mpcc}, tracking MPC is solved every sample while
centroid references are updated conditionally. Here the adaptive trigger schedules
both reference refresh and replanning, using a simplified fixed-reference
MPC rather than the artificial-steady-state and terminal-set design
of that work (see details in Appendix A). Here $M=5$, $A=[0,10]^2$, $T_s=0.2$ s, and
\begin{equation}
\begin{aligned}
 p_{m,k+1}&=p_{m,k}+T_sv_{m,k}+\tfrac12T_s^2u_{m,k}+w^p_{m,k},\\
 v_{m,k+1}&=v_{m,k}+T_su_{m,k}+w^v_{m,k}.
\end{aligned}\label{eq:double}
\end{equation}
The predictor sets $w^p=w^v=0$. Each local MPC uses
$N=N_{\max}=15$, componentwise velocity/input limits
$1$ m/s and $1$ m/s$^2$, and stage weights $2I$, $0.5I$, and $0.1I$
for position error, velocity, and input. Terminal
position/velocity weights and velocity limits are identical. All policies use this same MPC, implemented
with CasADi/IPOPT (solver tolerance $10^{-6}$).

The plant dynamics are independent but references are spatially
coupled. A \emph{global MPC update} is one centrally coordinated event:
the cells and references are refreshed and all $M$ agents each solve
one local MPC problem. Thus $S_T$ global updates mean $MS_T$ local
solves (5 per update here). Between events, agents continue
applying their stored input sequences. 

\subsection{Task-aware triggering and its budget guarantee}
Moving density can make a reference ineffective even when the robot
follows its prediction exactly. We therefore use
\begin{align}
 e_i^p(k)&=\max_m\|p_m(k)-p_m(k_i)\|_2,\\
 e_i^r(k)&=\max_m\|p_m(k)-\hat p_m(k\mid k_i)\|_2,\\
 e_i^c(k)&=\max_m\|c_m(W_{i,m},\phi_k)
                     -c_m(W_{i,m},\phi_{k_i})\|_2,\\
 s_i(k)&=\max\{e_i^p(k),e_i^r(k),e_i^c(k)\}.
 \label{eq:task}
\end{align}
These are displacement, position-prediction difference, and centroid
drift (units: m). The centroid term uses event-held cells to
detect density changes within the current Voronoi partition. Current density
is known at each monitoring sample; future motion is not used.
Both adaptive and fixed policies use the same statistic.
We substitute $s_i$ for $e_i$ in Algorithm~\ref{alg:adaptive}.

\begin{corollary}[Budget regulation for the coverage trigger]\label{cor:signal}
If the nonnegative statistic~\eqref{eq:task} has a uniform bound
$E_{\rm cov}$ over the first $a$ steps of each cycle,
then Theorem~\ref{thm:budget} and \eqref{eq:count} apply to the
global update sequence with $\varepsilon_{\max}=E_{\rm cov}$,
under the same recursion, strict crossing, interval cap, and
executable-plan assumptions.
\end{corollary}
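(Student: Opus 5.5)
The corollary is a direct reduction to Theorem~\ref{thm:budget}. I would check that the coverage loop, with $s_i$ in place of $e_i$, is an instance of Algorithm~\ref{alg:adaptive}, and that Assumption~\ref{ass:error} holds with $\varepsilon_{\max}=E_{\rm cov}$.

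The starting observation is that the proof of Theorem~\ref{thm:budget} uses the trigger signal in exactly two places. Part~1 uses only the recursion \eqref{eq:update} and $k_0=0$. This gives the identity \eqref{eq:identity} and the average formula \eqref{eq:average}, and neither depends on what the signal is. Part~2 uses only three signal properties:
\begin{itemize}
\item nonnegativity, in Case~2, to force $\Delta_{i+1}=1$ when the threshold is negative;
\item the uniform bound over the first $a$ samples together with strict crossing, in Case~1, to force $\Delta_{i+1}\ge a\ge\alpha$ when $\varepsilon_i\ge\varepsilon_{\max}$;
\item the interval cap $1\le\Delta_{i+1}\le N_{\max}$, in Case~3 and in the lower bounds.
\end{itemize}

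The argument would therefore proceed as follows.
\begin{enumerate}
\item Nonnegativity: $s_i(k)=\max\{e_i^p,e_i^r,e_i^c\}$ is a maximum of Euclidean norms, so $s_i\ge0$.
\item Uniform bound: the hypothesis gives $s_i(k_i+j)\le E_{\rm cov}$ for $1\le j\le a$, uniformly over all cycles. This is exactly \eqref{eq:errorbound} with $\varepsilon_{\max}=E_{\rm cov}$.
\item Same mechanics: the event rule \eqref{eq:trigger} (strict $>$, cap $N_{\max}$) and the update \eqref{eq:update} are unchanged when $s_i$ replaces $e_i$. The executable-plan assumption is retained explicitly; each local MPC supplies a plan of length $N=N_{\max}=15$, so agents can run between events.
\end{enumerate}
Re-running Cases~1--3 verbatim then gives invariance of \eqref{eq:bound} with $U=E_{\rm cov}+\eta(\alpha-1)$, and hence Aims~1--3. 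The finite-run bounds \eqref{eq:finite}--\eqref{eq:count} follow from \eqref{eq:bound} and \eqref{eq:average} exactly as in the Remark.

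The events here are global: one coordinated refresh of cells and references plus $M$ local solves. So $k_i$ indexes global updates, and $S_T$ counts global MPC updates, equivalently $MS_T$ local solves.

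The only delicate point is that Assumption~\ref{ass:error} was phrased for the prediction error "with the same applied inputs." Here the statistic also contains $e^p$ and $e^c$, which are not prediction errors. I would handle this by noting that Part~2 never uses the specific form of $e_i$, only the three properties listed above. The corollary's hypothesis therefore replaces the assumption directly. Notably, $e_i^c$ depends on the exogenous density $\phi_k$, but the uniform bound is assumed, so nothing about the density dynamics needs to be proved here. Appendix~B supplies one concrete value of $E_{\rm cov}$, but the corollary itself does not need it.
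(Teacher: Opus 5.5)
Your proposal is correct and follows essentially the same route as the paper. The identity \eqref{eq:identity} does not depend on the signal, and the three cases of Part~2 use only the upper bound (Case~1), nonnegativity (Case~2), and the interval cap (Case~3), so substituting $s_i$ for $e_i$ with $\varepsilon_{\max}=E_{\rm cov}$ preserves the invariant interval and all budget conclusions. Your added remarks, on the global-update meaning of $S_T$ and on why the ``same applied inputs'' wording of Assumption~\ref{ass:error} is immaterial, agree with the paper's one-line argument and spell it out slightly more.
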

\begin{proof}
Identity~\eqref{eq:identity} is unchanged. The three-region proof
uses only the statistic's upper bound, nonnegativity, and interval
limits, respectively, so the same invariant interval and budget
conclusions hold with $s_i$ replacing prediction error.
\end{proof}

For the disturbance-free moving-density case, the plant and predictor
coincide under the same inputs, so $e_i^r=0$.
With $a=6$ and $\eta=0.01$, the sufficient
one-sided initializations are $L=-0.09$ m and $U\approx14.192$ m (see details in the Appendix B).

%% file: coverage_figures.tex
\begin{figure*}[tp]
\centering\includegraphics[width=\textwidth]{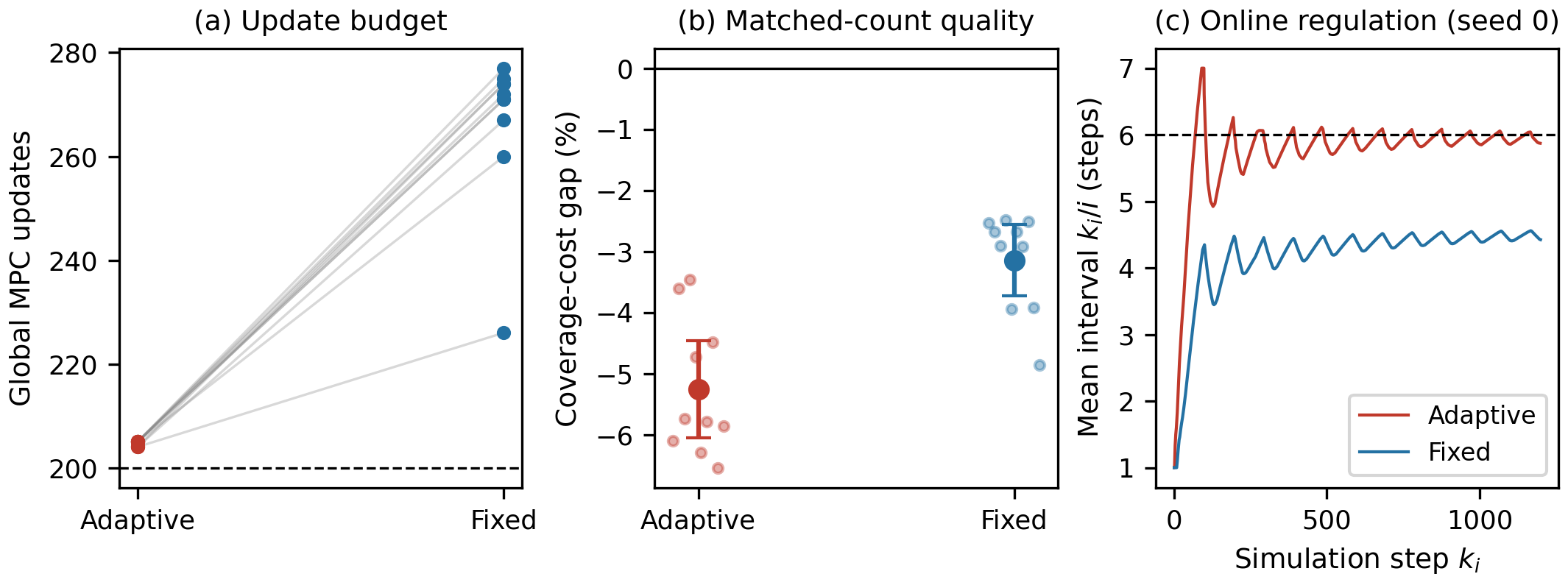}
\caption{Moving-density coverage with a common signal and initial
threshold. (a) Global MPC updates; gray lines pair ten initializations,
and the target is 200. One global update contains five local solves.
(b) Cost gaps against each policy's own exactly count-matched periodic
schedules: small points are seeds, large points are means, and bars
are 95\% confidence intervals. Negative gaps favor event-based timing;
the two columns have different baselines. (c) Cumulative average
interval $k_i/i$ at event times; the target is 6 steps.}
\label{fig:coverage}
\end{figure*}
\begin{figure*}[tp]
\centering\includegraphics[width=\textwidth]{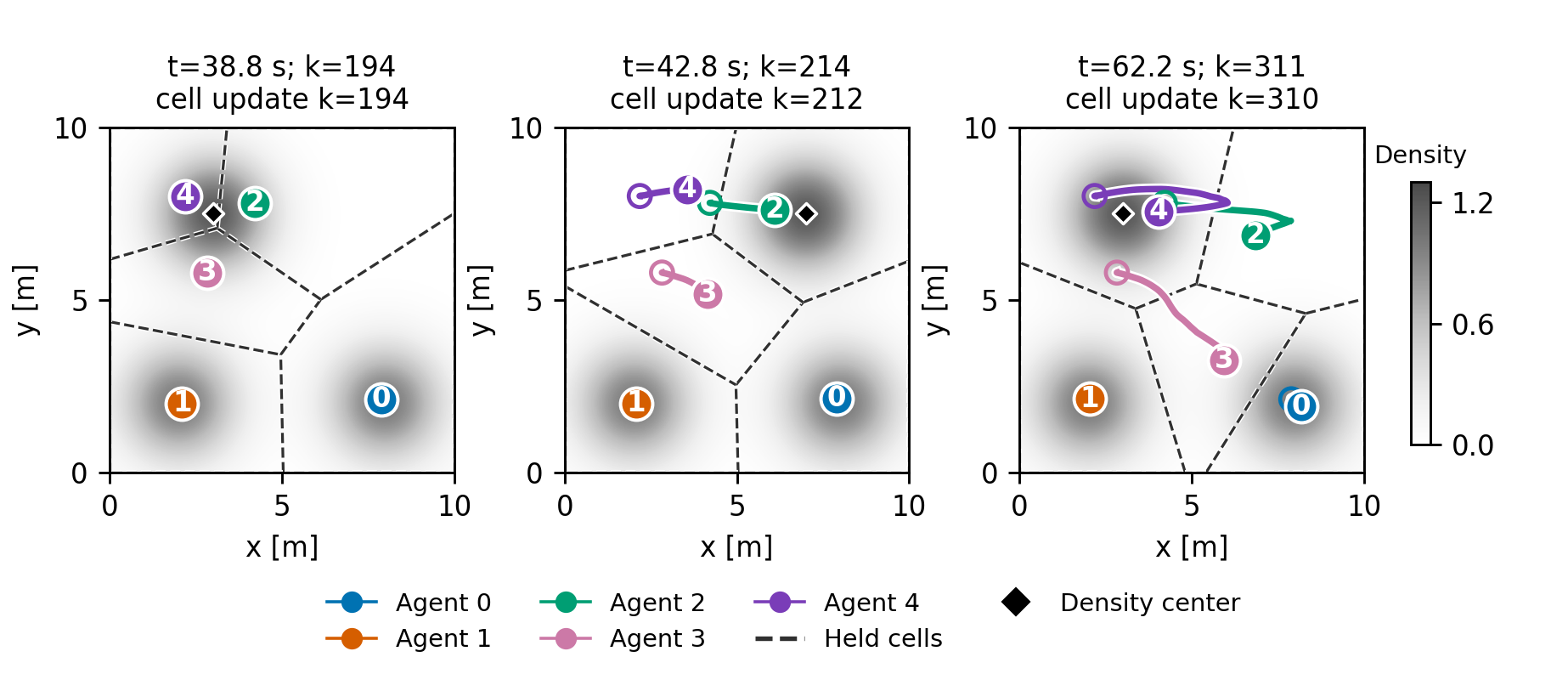}
\caption{Adaptive/task-aware coverage response. Trails
accumulate from $k=194$ to each displayed state
($t=38.8,42.8,62.2$ s) within a 240 s run. Numbered disks are
agents, open circles mark trail origins, and diamonds mark
Gaussian centers. Dashed Voronoi boundaries come from the latest
executed update at or before each state. The display grid is
$201\times201$; the evaluation grid remains $40\times40$.}
\label{fig:spatial}
\end{figure*}

%% file: coverage_results.tex
\subsection{Experimental setup}
The main density is a mixture of three isotropic Gaussians.
Two unit-amplitude components remain at $(2,2)$ and $(8,2)$;
a third, of amplitude 1.2, moves horizontally between $(3,7.5)$
and $(7,7.5)$. All standard deviations are 1.2 m. Beginning at
$k=0,97,194,\ldots$, motion occurs at 1 m/s for 20 steps, reversing
direction at each burst, with stationary intervals between bursts.
The main case has no plant disturbance. Runs last $T=1200$ steps
(240 s), with $\alpha=6$, $\eta=0.01$, initial positions uniform on
$[1.5,8.5]^2$, and zero initial velocities. Test seeds are 0 through
9, paired across policies.

The fixed threshold is calibrated offline using density speed
$0.6$ m/s. We select the threshold that
minimizes the mean absolute deviation of the global MPC update
count from the target $T/\alpha=200$. The search uses the
candidates $\{0,0.001,0.01,0.03,0.1,0.3,0.6,1\}$ m,
followed by four midpoint refinements around the target count.
This procedure gives $\varepsilon_0=0.2125$ m.

In the evaluation runs, both policies use the same density
speed of $1$ m/s, controller, trigger statistic, and initial
threshold $\varepsilon_0=0.2125$ m. The fixed policy retains
this threshold, whereas the adaptive policy updates it online.
The common signal, controller, and initialization isolate the
effect of adaptation under the task-speed change.

To isolate update placement, each policy/seed with $S$ actual
global MPC updates has its own integer periodic schedule with intervals
$d_\ell=\lfloor(\ell+1)T/S\rfloor-\lfloor\ell T/S\rfloor$,
$\ell=0,\ldots,S-1$. Cyclic rotations by
$0,\lfloor S/3\rfloor,\lfloor2S/3\rfloor$ are averaged within seed.
Each has exactly $S$ global updates, including $k=0$ but excluding $k=T$.
This count matching holds update quantity constant.
The full-run metric is
\begin{equation}
 J=\frac1T\sum_{k=1}^T
 \frac{H(p(k),k)}{\int_A\phi_k(q)\,dq},
 \label{eq:metric}
\end{equation}
evaluated on a $40\times40$ midpoint grid with the current-position
partition. This measures
density-weighted squared distance to the nearest agent over the
full run, rather than the local MPC tracking objective; lower is better.
For policy $\pi$ and seed $r$, the matched-count gap is
\begin{equation}
 g_{\pi,r}=100\left(\frac{J_{\pi,r}}
 {\bar J_{\rm periodic}(S_{\pi,r},r)}-1\right)\%.
 \label{eq:gap}
\end{equation}
The denominator averages the three periodic rotations with that
policy's realized update count. We report the mean gap and
paired-seed Student-$t$ 95\% confidence interval ($n=10$).

\subsection{Results and interpretation}
As shown in Fig.~\ref{fig:coverage}(a),
each point counts global MPC updates in one full run, and each gray line
pairs the same initial configuration across policies.
The target 200 corresponds to an average period of six sampling
steps (1.2 s), or 1000 individual MPC solves over 240 s.
The adaptive policy uses \MovingSolves\ global updates on average, whereas the
calibrated fixed policy uses \FixedSolves. Mean count deviations are
$\AdaptiveCountError\%$ and $\FixedCountError\%$.
The adaptive policy uses fewer global updates
relative to the fixed policy's mean count.
It illustrates the closeness to the requested budget under the
density-speed change, rather than fewer updates irrespective of
the target.

Periodic updates can meet an average budget, so the relevant question
is whether event-dependent timing uses the same number of updates
more effectively. In Fig.~\ref{fig:coverage}(b), against its own exact-count periodic comparator,
the adaptive policy achieves
lower cost in all ten runs.
The benefit therefore persists when the number of global updates,
and hence individual MPC solves, is exactly equal.
Fixed task-aware triggering improves
over its own periodic comparator by \FixedGain\%.
Each column uses its own matched-count baseline; the difference
between the two plotted mean gaps is not a direct adaptive--fixed
cost comparison. The test in Fig.~\ref{fig:coverage}(a) isolates budget adaptation at a common
signal, while the test in Fig.~\ref{fig:coverage}(b) demonstrates task-dependent update placement
within each realized budget. 

The red curve in Fig.~\ref{fig:coverage}(c) is the cumulative mean of completed intervals,
$k_i/i$, plotted against simulation step $k_i$ for one representative
initialization, not an average over seeds or an instantaneous interval.
It approaches the requested 6-step period, while the fixed policy
remains near 4.5 steps and therefore updates more frequently.
The remaining fluctuations are compatible with average-period
regulation: individual intervals can vary with task motion.
Together with the ten-seed counts in the test in Fig.~\ref{fig:coverage}(a) , this trajectory
illustrates both the transient adjustment and its finite-run outcome.

Figure~\ref{fig:spatial} shows the upper agents responding to density
motion while lower agents stay near the stationary components;
the trails display one response window within the 240 s run.
In ten additional runs with
$\varepsilon_0=0$, the adaptive policy uses \ZeroSolves\ global updates on average and
reduces cost by \ZeroGain\% against its own count-matched schedules. This supports operation
without scenario-specific threshold initialization in this case. 